% CHANGED ASSUMPTION 3

\documentclass[11pt]{article}

\usepackage[utf8]{inputenc}
\usepackage{amsmath, amssymb, amsthm, bm, dsfont, enumitem, mathtools, physics}
\usepackage[margin = 1in]{geometry}
\usepackage{graphicx}
\usepackage[colorlinks]{hyperref}
\usepackage[capitalize]{cleveref}
\usepackage{setspace}
\usepackage[none]{hyphenat}
\sloppy

\hypersetup{
  linkcolor = black,
  citecolor = black,
  urlcolor = blue
}

\usepackage[round]{natbib}
\bibliographystyle{aer}

\setlength{\parindent}{0em}
\setlength{\parskip}{1em}
\doublespacing

\theoremstyle{definition}

\newtheorem{assumption}{Assumption}

\newtheorem{theorem}{Theorem}
\newtheorem{corollary}{Corollary}

\newtheorem{remark}{Remark}

\title{Utility Maximization Under Endogenous Uncertainty}
\author{Ayush Gupta\thanks{Boston University, Department of Economics; guptaayu@bu.edu \\
I am indebted to Mark Kon, Alex Chan, and Jawwad Noor for many helpful discussions and suggestions.}}
\date{26 May 2026 \\
\href{https://www.ayush-gupta.com/Files/Endogenous Uncertainty.pdf}{Please click here for the latest version}}

\begin{document}

\setdisplayskipstretch{1}
\setlength{\abovedisplayskip}{-1.5\parskip}
\setlength{\belowdisplayskip}{\baselineskip - 1.5\parskip}

\maketitle
\thispagestyle{empty}

\begin{abstract}
This paper studies decision problems where the decision maker's choice of action affects the probability distribution of a payoff relevant random variable. We establish sufficient conditions for the existence of an expected utility maximizing action in such settings. The main requirement is a mild continuity condition on the family of possible distributions. We also show that this condition is a minimal requirement. Our result does not require common assumptions such as the monotone likelihood ratio property (MLRP) or the convexity of distribution functions condition (CDFC). It can therefore be used to prove the existence of an optimal action in many settings where existing results do not apply, including an important class of problems where the support of the random variable depends on the decision maker's choice and the density functions are not pointwise continuous.
\end{abstract}

\newpage

\section{Introduction}
\label{introduction}

The existence of an expected utility maximizing action is a minimal requirement for any model of decision-making under uncertainty. When the uncertainty facing the decision maker can be represented by a random variable with a fixed probability distribution, such an optimal action exists under mild continuity and compactness conditions. However, the relevant random variable need not have a fixed distribution. If the distribution depends on her choice of action, we say that the decision maker faces endogenous uncertainty. Such endogenous uncertainty is salient in many economic settings. For example, consider a firm that must choose how much to invest in R\&D. If the level of investment affects the probability of a technological breakthrough, the distribution of the relevant random variable (R\&D output) depends on the firm's action. Similarly, a worker's level of effort may affect the probability of her being promoted; an individual's choice of lifestyle may affect the distribution of her future health status.

In settings with endogenous uncertainty, existing results require strong assumptions such as the monotone likelihood ratio property (MLRP) and the convexity of distribution functions condition (CDFC). The existence of an optimal action therefore cannot be verified in a broad class of economic problems. In particular, when the support of the random variable depends on the decision maker's choice, existing results are often not applicable. This paper establishes a more general existence result that addresses these concerns. We show that it is sufficient for the family of probability distributions to satisfy a mild continuity condition. We also show that this is a minimal condition because an optimal action need not exist when the condition is violated.

The additional generality of our result allows us to prove existence in many important settings where existing results do not apply. For example, consider a firm that must choose from a continuum of R\&D projects. The relevant random variable is the chosen project's outcome. Investing in a more speculative project increases the probability of a technological breakthrough but it also increases the probability of complete failure. The distribution of the random variable thus depends on the firm's choice of project. CDFC does not hold in this environment because choosing a more speculative project shifts probability mass from intermediate outcomes to more extreme outcomes. Choosing a different project therefore changes the shape of the distribution, rather than simply shifting it toward one extreme. MLRP may be violated for a similar reason. The likelihood ratio may not be monotone because a technological breakthrough and complete failure are both more likely when the project is more speculative. Moreover, if the value of a technological breakthrough depends on the choice of project, the support of the random variable is not fixed. Then the probability densities may not be pointwise continuous. Nevertheless, our result can be used to prove the existence of an optimal choice of project.

The remainder of this section discusses some related literature; \cref{model} presents our model and assumptions; our main result and its proof are contained in \cref{results}; \cref{example} presents a more formal example to illustrate the significance of our result; and \cref{conclusion} concludes.

\subsection{Related Literature}

This paper builds on the classic literature that studies the conditions under which an expected utility maximizing action exists. For settings without endogenous uncertainty, \cite{jordan1977} provided a very general result for continuous and bounded utility functions. Even when the utility function is unbounded, \cite{kennan1981} showed that mild continuity and compactness conditions are sufficient for the existence of an optimal action. 

This paper is also related to the principal-agent literature. That is because, in standard models with moral hazard, the employment contract maps the worker's level of effort to a distribution over wages.\footnote{In a standard principal-agent model with moral hazard, the agent chooses a level of effort, $e$. Output, $y$ is a random variable but its distribution depends on $e$. The principal observes $y$ but not $e$. The agent's wage, $w$ depends on $y$. The agent's choice of $e$ thus affects the distribution of $w$. See \cite{holmstrom1979} for a formal description of the standard model.} The worker then solves a decision problem with endogenous uncertainty. Much of the existing literature obtains tractable solutions to this problem using the first-order approach. However, \cite{rogerson1985} showed that the first-order approach is valid only under additional assumptions such as MLRP and CDFC.

Endogenous uncertainty is salient in many other literatures as well. For example, \cite{hansen2001}, \cite{paciello2014}, and \cite{fajgelbaum2017} all study environments in which the decision maker's choice of action affects the uncertainty she faces.

\section{Notation and Assumptions}
\label{model}

$A$ is the set of actions available to the decision maker.

$X$ is the set of possible realizations of a random variable. $\mathcal{F}$ is the Borel $\sigma$-algebra on $X$.

The probability measure on $X$ depends on the decision maker's choice of action. The measure is given by $m_a$ when the decision maker chooses $a \in A$. Note that we do not require the support of the random variable to be the same for every choice of action.

$u \colon A \times X \to \mathbb{R}$ is the decision maker's utility function. The expected utility of $a$ is:

\begin{align*}
v(a) = \int_{X} \; u(a, x) \; dm_a (x)
\end{align*}

Notice that changing $a$ has two distinct effects on expected utility: a direct effect through the utility function and an indirect effect through the probability measure. This is the key difference between our setting and a setting without endogenous uncertainty -- the probability measure is fixed when there is no endogenous uncertainty.

Our existence result requires the following assumptions:

\begin{assumption}
\label{a1}

$A$ is a compact and first countable topological space.

\end{assumption}

\begin{assumption}
\label{a2}

$u$ is continuous on $A \times X$.

\end{assumption}

\begin{assumption}
\label{a3}

$\overline{u} (x) \equiv \sup_{a \in A} \abs{u(a, x)}$ is integrable with respect to all $m_a$.

\end{assumption}

\begin{assumption}
\label{a4}

Let $g \colon X \to \mathbb{R}$ be any continuous function that is integrable with respect to all $m_a$. Then $a_n \to a$ implies:

\begin{align}
\int_{X} \; g(x) \; dm_{a_n} (x) \; \to \; \int_{X} \; g(x) \; d m_a (x) \label{eq1}
\end{align}

\end{assumption}

\cref{a1} and \cref{a2} are ubiquitous in the literature. Note that only the set of actions is assumed to be compact; $X$ may be any arbitrary topological space. In particular, we allow the support of the random variable to be unbounded.

\cref{a3} is a generalization of Assumption A4' in \cite{kennan1981}. It ensures that expected utility is always finite, without requiring the utility function to be bounded.\footnote{The St. Petersburg paradox is an example of a decision problem ruled out by \cref{a3}.} Intuitively, \cref{a3} allows utility to be arbitrarily high for some realizations of the random variable, as long as the probability of such realizations is sufficiently low.

\cref{a4} formalizes our main restriction on the family of probability measures. Notice that this condition is stronger than weak convergence of measures. That is because weak convergence requires \eqref{eq1} to hold for continuous and bounded functions only. \cref{a4} requires \eqref{eq1} to hold for all continuous and integrable functions. The stronger condition is needed because we allow for unbounded utility functions.

If we limit ourselves to continuous and bounded utility functions, \cref{a4} can be relaxed and \cref{a3} can be dropped altogether.\footnote{The same can be achieved by limiting ourselves to compact $X$.} However, doing so excludes many economic problems of interest. For example, in a standard portfolio choice problem, the rate of return has unbounded support. As a result, the total return may be arbitrarily large and utility is typically not bounded.\footnote{In a standard portfolio choice problem, the investor can invest in a risky asset or a risk-free asset. The rate of return on the risk-free asset is a constant, $r_f$. The rate of return on the risky asset is a random variable, $r$. The total return is $y r + (w_0 - y) r_f$ when $w_0$ is the total investment and $y$ is the investment in the risky asset. The relevant random variable is the total return and its distribution depends on the investor's choice of $y$. See \cite{merton1969} for a formal description of the standard problem.}

\section{Results}
\label{results}

The decision maker wants to maximize her expected utility. Her optimization problem can be written as:

\begin{align*}
\max_a v(a) = \max_a \int_{X} \; u(a, x) \; d m_a (x)
\end{align*}

Our main result says that:

\newpage
\begin{theorem}

The decision maker's expected utility maximization problem has at least one solution under Assumptions 1 - 4.

\end{theorem}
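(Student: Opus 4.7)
The plan is to show that $v$ is upper semi-continuous on $A$ and then invoke the extreme value theorem for upper semi-continuous functions on compact spaces. Because $A$ is compact and first countable (Assumption 1), it suffices to verify the sequential version: $v(a) \geq \limsup_n v(a_n)$ whenever $a_n \to a$. Once this is established, the super-level sets $\{a \in A : v(a) \geq c\}$ are closed, hence compact, and the nested family of such sets as $c$ increases to $\sup_A v$ has a common point, which is a maximizer.

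To establish sequential upper semi-continuity, fix a convergent sequence $a_n \to a$ and split the expected utility as
\begin{align*}
v(a_n) = \int_\Omega \bigl[u(a_n,\omega) - u(a,\omega)\bigr] \, dm_{a_n}(\omega) + \int_\Omega u(a,\omega) \, dm_{a_n}(\omega).
\end{align*}
For the first term, equicontinuity (Assumption 2) provides, for any $\epsilon > 0$, a neighborhood of $a$ on which $|u(a',\omega) - u(a,\omega)| < \epsilon$ uniformly in $\omega$. Since $a_n$ eventually lies in that neighborhood and $m_{a_n}$ is a probability measure, the first integral is bounded in absolute value by $\epsilon$ for all large $n$. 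For the second term, $u(a,\cdot)$ is integrable by Assumption 3, so Assumption 4 applied with $g = u(a,\cdot)$ yields
\begin{align*}
\limsup_n \int_\Omega u(a,\omega) \, dm_{a_n}(\omega) \leq \int_\Omega u(a,\omega) \, dm_a(\omega) = v(a).
\end{align*}
Combining the two estimates gives $\limsup_n v(a_n) \leq v(a) + \epsilon$, and letting $\epsilon \downarrow 0$ delivers the desired inequality.

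The main obstacle is the decomposition step itself: Assumption 4 only controls integrals of a \emph{fixed} function against the varying measures $m_{a_n}$, so one must first replace $u(a_n,\cdot)$ with $u(a,\cdot)$ inside the integral, uniformly in $\omega$, before invoking it. Equicontinuity in Assumption 2 is precisely what makes this uniform replacement possible; mere pointwise continuity of $u(\cdot,\omega)$ would not suffice, because one cannot appeal to dominated convergence against a moving probability measure whose support may even shift. Everything else — first countability reducing upper semi-continuity to a sequential statement, integrability ensuring $v$ is real-valued, and the standard compactness argument extracting a maximizer from a USC function on a compact space — is routine once the decomposition goes through.
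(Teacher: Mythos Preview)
Your proof is correct and follows essentially the same route as the paper: split $v(a_n)$ into $\int [u(a_n,\omega)-u(a,\omega)]\,dm_{a_n}$ plus $\int u(a,\omega)\,dm_{a_n}$, control the first piece with equicontinuity and the fact that $m_{a_n}$ is a probability measure, and bound the $\limsup$ of the second piece by $v(a)$ via Assumption 4. The only cosmetic differences are that the paper passes through uniform equicontinuity on the compact $A$ (whereas you use equicontinuity at the single point $a$, which is all that is actually needed) and that you make explicit the role of first countability in reducing USC to the sequential criterion.
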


\begin{proof}

$A$ is compact (\cref{a1}). So a maximizer exists if $v$ is upper semi-continuous.

$A$ is also first countable (\cref{a1}). So $v$ is upper semi-continuous if:

\begin{align}
\limsup_n \int_{X} \; u(a_n, x) \; dm_{a_n} (x) \leq \int_{X} \; u(a, x) \; dm_a (x) \label{eq2}
\end{align}

whenever $a_n \to a$. This proof will show that \eqref{eq2} holds under Assumptions 1 - 4. Note that the present approach is necessary because the utility function and the measure both depend on $n$. This dual dependence prevents us from using standard convergence theorems to move the limit inside the integral.

We start by defining:

\begin{align*}
u_m (x) &= \sup_{k \geq m} \; u(a_k, x)
\\
A_m &= \{ a \} \cup \{ a_k \colon k \geq m \}
\end{align*}

$A_m$ is a compact set because any open cover of $A_m$ will include an open set $O$ such that $a \in O$. $a_k \to a$ so all but finitely many elements of $A_m$ will lie in $O$. The remaining elements can then be covered by a finite subset of the open cover.

Then:

\begin{align*}
u_m (x) &= \sup_{a \in \{ a_k \colon k \geq m \}} \; u(a, x)
\\
&= \sup_{a \in A_m} \; u(a, x)
\\
&= \max_{a \in A_m} \; u(a, x)
\end{align*}

The second equality comes from the fact that $u$ is continuous (\cref{a2}). So adding a limit point does not change the supremum. The third equality comes from the fact that a continuous function on a compact set attains its supremum.

$u_m (x)$ is thus the maximum of a continuous function over a compact set. So Berge's Maximum Theorem says that $u_m (x)$ is continuous on $X$.

Next note that $u(a_n, x) \leq u_m (x)$ if $m \leq n$. So $\limsup_n u(a_n, x) \leq u_m (x)$ for every $m$. Thus for every $m$: 

\begin{align*}
\limsup_n \int_{X} \; u(a_n, x) \; dm_{a_n} (x) \leq \limsup_n \int_{X} \; u_m (x) \; dm_{a_n} (x)
\end{align*}

In particular:

\begin{align*}
\limsup_n \int_{X} \; u(a_n, x) \; dm_{a_n} (x) \leq \inf_m \; \limsup_n \int_{X} \; u_m (x) \; dm_{a_n} (x)
\end{align*}

Now recall that $\overline{u} (x) \equiv \sup_{a \in A} \abs{u(a, x)}$. So for every $m$:

\begin{align*}
\abs{u_m (x)} \leq \overline{u} (x)
\end{align*}

Therefore $u_m$ is integrable because $\overline{u}$ is integrable. Thus by \cref{a4}:

\begin{align*}
\inf_m \; \limsup_n \int_{X} \; u_m (x) \; dm_{a_n} (x) &= \inf_m \int_{X} \; u_m (x) \; dm_a (x)
\\
&= \lim_{m \to \infty} \int_{X} \; u_m (x) \; dm_a (x)
\end{align*}

The last equality comes from the fact that $u_m$ is decreasing in $m$. We can now apply the Dominated Convergence Theorem to get:

\begin{align*}
\lim_{m \to \infty} \int_{X} \; u_m (x) \; dm_a (x) &= \int_{X} \; \lim_{m \to \infty} \; u_m (x) \; dm_a (x)
\\
&= \int_{X} \; \lim_{m \to \infty} \sup_{k \geq m} \; u(a_k, x) \; dm_a (x)
\\
&= \int_{X} \; \limsup_{k \to \infty} \; u(a_k, x) \; dm_a (x)
\\
&= \int_{X} \; u(a, x) \; dm_a (x)
\end{align*}

We have thus shown that $v$ is upper semi-continuous. Therefore the decision maker's maximization problem has at least one solution.

\end{proof}

\begin{remark}

A symmetric argument can be used to show that $v$ is also lower semi-continuous. The expected utility function is therefore continuous under Assumptions 1 - 4.

\end{remark}

Theorem 1 shows that Assumptions 1 - 4 are sufficient to ensure the existence of an expected utility maximizing action. The key restriction is \cref{a4} because it controls how the probability measure responds to changes in $a$. The importance of \cref{a4} is immediate from the proof of Theorem 1.

\begin{corollary}

\cref{a4} is necessary to ensure that the expected utility function is upper semi-continuous for all decision problems satisfying Assumptions 1 - 3.

\end{corollary}

\begin{proof}

Fix $A$ such that \cref{a1} is satisfied.

Fix $X$ and a family of probability measures $\{ m_a \colon a \in A \}$ such that \cref{a4} is violated.

Then there exists a continuous function $g \colon X \to \mathbb{R}$ such that it is integrable with respect to all $m_a$ and for some $a_n \to a$:

\begin{align*}
\int_{X} \; g(x) \; dm_{a_n} (x) \not\to \int_{X} \; g(x) \; dm_a (x)
\end{align*}

Then either:

\begin{align}
\limsup_n \int_{X} \; g(x) \; dm_{a_n} (x) \; > \; \int_{X} \; g(x) \; dm_a (x) \label{eq3}
\end{align}

And / or:

\begin{align}
\liminf_n \int_{X} \; g(x) \; dm_{a_n} (x) \; < \; \int_{X} \; g(x) \; dm_a (x) \label{eq4}
\end{align}

Notice that \eqref{eq4} implies:

\begin{align*}
\limsup_n \int_{X} \; -g(x) \; dm_{a_n} (x) \; > \; \int_{X} \; -g(x) \; dm_a (x)
\end{align*}

Since $-g$ is also continuous and integrable, it is without loss of generality to assume \eqref{eq3}.

Now define $u(a, x) = g(x)$.

\cref{a2} is trivial because $g$ is continuous on $X$.

\cref{a3} is satisfied because $\sup_{a \in A} \abs{u(a, x)} = \abs{g(x)}$ and $g(x)$ is integrable.

Recall that:

\begin{align*}
v(a) = \int_{X} \; u(a, x) \; d m_a (x) = \int_{X} \; g(x) \; d m_a (x)
\end{align*}

So \eqref{eq3} implies that for some $a_n \to a$:

\begin{align*}
\limsup_n \; v(a_n) > v(a)
\end{align*}

We thus obtain a decision problem that satisfies Assumptions 1 - 3 but $v$ is not upper semi-continuous because \cref{a4} is violated.

\end{proof}

The proof shows that, whenever \cref{a4} is violated, there exists a utility function such that expected utility is not upper semi-continuous. It is clear that an optimal action need not exist without upper semi-continuity. \cref{a4} is therefore a minimal condition -- it cannot be weakened without imposing additional restrictions on the decision problem.

Note that \cref{a4} is a condition on the family of probability measures. It does not require the measures to admit density functions with respect to a common reference measure. Even when such densities exist, \cref{a4} is not equivalent to a pointwise continuity condition on those densities.

Formally, suppose $X$ is equipped with a $\sigma$-finite measure $\mu$ such that every $m_a$ is absolutely continuous with respect to $\mu$. Then every $m_a$ admits a density function $f_a$ with respect to $\mu$. This family of densities is pointwise continuous if for every $x \in X$:

\begin{align*}
f_{a_n} (x) \to f_a (x)
\end{align*}

whenever $a_n \to a$. We stress that pointwise continuity of the density functions is not necessary for \cref{a4}. \cref{example} provides an explicit example of a decision problem where the densities are not pointwise continuous but \cref{a4} is satisfied. Intuitively, pointwise continuity is most likely to fail when the support of the random variable depends on the choice of action.

\section{Example}
\label{example}

The decision maker is a physician.

$a \in [0, 1]$ is the dosage of an experimental cancer drug.

$x \in [-1, 1]$ is the patient's clinical outcome. $x = 1$ represents complete recovery while $x = -1$ represents death due to severe toxicity.

Let $U \left[ I \right]$ denote the uniform probability measure on the closed interval $I$. Then the probability measure on $X$ is:

\begin{align*}
m_a = U \left[ -\frac{1 + a}{2}, \frac{1 + a}{2} \right] 
\end{align*}

Note that the support of $x$ depends on the choice of $a$. Higher dosages allow for better clinical outcomes but also allow for worse outcomes.

The utility function is given by $u(a, x) = b(x) - c(a)$. $b(x)$ represents the value of the realized outcome and $c(a)$ captures the cost of the dosage. $c(a)$ can be interpreted as the disutility from the known side effects of the drug. Assume that $b$ and $c$ are both continuous functions.

Every probability measure in this example admits a density function with respect to the standard Lebesgue measure:

\begin{align*}
f_a (x) = \frac{1}{1 + a} \; \mathds{1} \left( \abs{x} \leq \frac{1 + a}{2} \right)
\end{align*}

However, this family of densities is not pointwise continuous.\footnote{The density functions are not unique, but there is no choice of densities that makes $a \mapsto f_a$ pointwise continuous.}

To see that, fix any $x \in \left[ -1, -\frac{1}{2} \right) \cup \left( \frac{1}{2}, 1 \right]$. Let $a_n \to a$ be any sequence such that $a_n < a$ for every $n$ and $\frac{1 + a}{2} = \abs{x}$. Then:

\begin{align*}
\abs{x} > \frac{1 + a_n}{2} \implies f_{a_n} (x) = 0 \not\to \frac{1}{1 + a} = f_a (x)
\end{align*}

The densities are thus discontinuous at $\frac{1 + a}{2} = \abs{x}$.

It is also clear that this example violates CDFC. That is because $F_a (x) = \frac{1}{2} + \frac{x}{1 + a}$ when $\abs{x} \leq \frac{1 + a}{2}$.

Differentiating twice with respect to $a$ gives $F''_a (x) = \frac{2x}{(1 + a)^3}$. Notice that $F''_a (x) < 0$ if $x < 0$.

The distribution function is thus concave, rather than convex, for all $x \in \left[ -\frac{1}{2}, 0 \right)$.

MLRP requires that:

\begin{align*}
f_{a'} (x_H) \; f_a (x_L) \geq f_{a'} (x_L) \; f_a (x_H)
\end{align*}

when $a' > a$ and $x_H > x_L$. Now fix:

\begin{align*}
-\frac{1 + a'}{2} < x_L < -\frac{1 + a}{2} < x_H < \frac{1 + a}{2} < \frac{1 + a'}{2}
\end{align*}

So $x_H$ is in the support of both $a$ and $a'$ but $x_L$ is only in the support of $a'$. Then:

\begin{align*}
f_{a'} (x_H) \; f_a (x_L) &= 0
\\
f_{a'} (x_L) \; f_a (x_H) &= \frac{1}{(1 + a)(1 + a')} > 0
\end{align*}

MLRP is thus violated in this example.

Nevertheless, Theorem 1 says that an expected utility maximizing action exists.

It is obvious that \cref{a1} and \cref{a2} are satisfied.

The utility function is continuous and $A$ and $X$ are both compact sets. So $\overline{u}$ is also bounded and hence integrable. \cref{a3} is therefore satisfied.

To see that \cref{a4} is satisfied, note that for any continuous and integrable function $g$:

\begin{align*}
\int_{X} \; g(x) \; d m_a (x) = \frac{1}{1 + a} \; \int_{-\frac{1 + a}{2}}^{\frac{1 + a}{2}} \; g(x) \; dx
\end{align*}

The Fundamental Theorem of Calculus says that $\int_{-\frac{1 + a}{2}}^{\frac{1 + a}{2}} \; g(x) \; dx$ is a continuous function of $a$ because $g$ is continuous.

$\frac{1}{1 + a}$ is also continuous so \cref{a4} is satisfied in this example.

Theorem 1 then ensures that an expected utility maximizing action exists.

\section{Conclusion}
\label{conclusion}

This paper studies decision problems where the distribution of a payoff relevant random variable depends on the decision maker's choice of action. Such endogenous uncertainty is common in many different fields of economics. However, existing results require strong assumptions that limit their applicability. This paper establishes a general existence result that can be used to prove the existence of an optimal action in a broad class of problems where existing results do not apply. Our result shows that it is sufficient for the family of probability distributions to satisfy a mild continuity condition. We also show that this condition is a minimal requirement because an optimal action need not exist when it is violated. The main technical challenge is integrating over all possible realizations of the random variable when the realized utility and the probability measure both depend on the choice of action. This dual dependence prevents straightforward applications of Fatou’s Lemma or conventional dominated convergence arguments. We overcome this challenge by using a tail envelope and show that the expected utility function is upper semi-continuous under our assumptions.

\bibliography{References}

\end{document}